\documentclass[letterpaper]{article}
\usepackage[square,numbers]{natbib}
\usepackage{physics}
\usepackage{amsmath,amsthm,amsfonts,amssymb}
\newtheorem{theorem}{Theorem}

\newtheorem{definition}{Definition}
\newtheorem{lemma}{Lemma}
\usepackage{geometry}
\bibliographystyle{abbrvnat}

\usepackage[utf8]{inputenc} 
\usepackage[T1]{fontenc}    
\usepackage{hyperref}       
\usepackage{url}            
\usepackage{booktabs}       
\usepackage{amsfonts}       
\usepackage{nicefrac}       
\usepackage{microtype}      

\makeatletter
\newcommand\ackname{Acknowledgements}
\if@titlepage
   \newenvironment{acknowledgements}{%
       \titlepage
       \null\vfil
       \@beginparpenalty\@lowpenalty
       \begin{center}%
         \bfseries \ackname
         \@endparpenalty\@M
       \end{center}}%
      {\par\vfil\null\endtitlepage}
\else
   \newenvironment{acknowledgements}{%
       \if@twocolumn
         \section*{\abstractname}%
       \else
         \small
         \begin{center}%
           {\bfseries \ackname\vspace{-.5em}\vspace{\z@}}%
         \end{center}%
         \quotation
       \fi}
       {\if@twocolumn\else\endquotation\fi}
\fi
\makeatother

\title{$2$-Local Hamiltonian with Low Complexity is QCMA complete}

\author{%
  Ying-hao~Chen\\
  \texttt{yhchen@cs.utexas.edu} \\
}

\begin{document}

\maketitle

\begin{abstract}
We prove that $2$-Local Hamiltonian ($2$-LH) with Low Complexity problem is QCMA-complete by combining the results from the QMA-completeness\cite{kempe2006complexity} of $2$-LH and QCMA-completeness of $3$-LH with Low Complexity\cite{wocjan2003two}. 
The idea is straightforward. It has been known that $2$-LH is QMA-complete. By putting a low complexity constraint on the input state, we make the problem QCMA. Finally, we use similar arguments as in \cite{kempe2006complexity} to show that all QCMA problems can be reduced to our proposed problem. 
\end{abstract}

\begin{acknowledgements}
   We thank Professor Scott Aaronson for his advices and help with this work. This work was a course project of CS 395T Quantum Complexity Theory.
\end{acknowledgements}

\section{Preliminary}
QMA is a quantum version complexity class of NP, where the verifier can be a quantum verifier and the proof is allowed to be quantum proof. QCMA is somewhat between NP and QMA. 
QCMA contains MA\cite{babai1988arthur} but contained in QMA\cite{kitaev2002classical}. 
Unfortunately, it is still open whether QCMA is strictly less powerful than QMA. 

People tried to study the difference between QCMA and QMA from many kinds of perspectives. 
One way is to study the oracle separation. 
It has been shown that there exists a quantum circuit oracle that separates QCMA and QMA\cite{aaronson2007quantum}. 
However, we still don't have any classical oracle separation between them.

We can also study their difference from the perspective of their complete problems. First $\log(n)$-LH, then $5$-LH, $3$-LH and finally $2$-LH have been proved to be QMA-complete\citep{kitaev2002classical,kempe20033,kempe2006complexity}. 
But we still don't know whether any of them is in QCMA or not, and we don't have too many QCMA-complete problems, either. 

Wocjan et al.\cite{wocjan2003two} show that adding the low complexity constraint to $3$-LH problem makes it QCMA-complete. 
We simply combine their result with the QMA-completeness of $2$-LH\cite{kempe2006complexity} to show that $2$-LH with low complexity is QCMA-complete. 
\subsection{QCMA}
We will think of QCMA, Quantum Classical Merlin-Arthur, as a class of promise problems rather than a class of languages. A promise problem $L$ can be divided into 2 disjoint sets $L = L_{yes} \cup L_{no} $ where instances are promised to be either ``Yes'' or ``No''. 
If $L\in \text{QCMA}$, there exist a quantum polynomial time verifier $V_x$ such that for any instance $x\in L$, $x$ can be verified with the help of a basic state witness $\ket{y}$ only if $x\in L_{yes} $. Formally, let $\mathcal{B} = \mathbb{C}^2$ denote the Hilbert space of a qubit. 
\begin{definition}[QCMA]
Fix $\epsilon = \epsilon(|x|) $ s.t. $2^{\Omega(|x|)} \leq \epsilon \leq \frac{1}{3}$. 
A promise problem $L = L_{yes} \cup L_{no} $ is in QCMA if for any $x\in L $, 
there exists a quantum circuit $V_x $ with $|V_x|$ elementary quantum gates which acts on the Hilbert space
\[
  \mathcal{H} := \mathcal{B}^{\otimes n_x}\otimes \mathcal{B}^{\otimes m_x}
\]
where there are $n_x $ input qubit registers and $m_x $ ancilla qubit registers and $|V_x|, n_x, m_x \in poly(|x|)$ such that
\[
  x \in L_{yes} \Rightarrow \exists y\in\{0,1\}^{n_x},\; Tr(V_x(\ket{y}\bra{y}\otimes\ket{0}\bra{0})V_x^{\dagger}P_1) \geq 1-\epsilon
\]
\[
  x \in L_{no} \Rightarrow \forall y\in\{0,1\}^{n_x},\; Tr(V_x(\ket{y}\bra{y}\otimes\ket{0}\bra{0})V_x^{\dagger}P_1) \leq \epsilon
\]
where $P_1 $ is the projection corresponding to the measurement on the first output qubit. $Tr(V_x(\ket{y}\bra{y}\otimes\ket{0}\bra{0})V_x^{\dagger}P_1)$ is the probability for the first output qubit to be state $1$ on the measurement. 
\end{definition}

\subsection{k-Local Hamiltonian problem}
$k$-LH is a quantum version of the MAX-$k$-SAT problem. 
\begin{definition}[$k$-LH]
  Given $H = \sum\limits_{i=1}^M H_i $ where $H_i $ is $k$-local. 
  That is, each $H_i $ is a Hamiltonian acting on at most $k$ qubits. It is promised that either
  \begin{enumerate}
    \item $\exists \ket{\psi} $ s.t. $\expval{H}{\psi} \leq a $, or
    \item $\forall \ket{\psi} $, $\expval{H}{\psi} \geq b $.
  \end{enumerate}
  where $ 0 < a < b < 1 $ are constants. The problem is to decide which. 
\end{definition}

\begin{definition}[Low Complexity State]
  Let $\mathcal{L}_C $ denote the set of low complexity states. 
  We say that $\ket{\psi}\in\mathcal{L}_C$ if and only if we can prepare $\ket{\psi} $ 
  by a sequence of elementary quantum gates with size polynomial in the size of $\ket{\psi} $. That is,
  \[
  \ket{\psi} = U_TU_{T-1}\cdots U_1\ket{0}
  \]
  for some elementary quantum gates $U_1,\ldots, U_T$ where $T = poly(size(\ket{\psi})) $.
\end{definition}

\section{Main Result}
In this section, we prove our main result.
\begin{theorem}[$2$-LH with Low Complexity is QCMA-complete]
  Given any $2$-local Hamiltonian $H = \sum\limits_{i=1}^M H_i $, and promised that either
  \begin{enumerate}
    \item There exists a low energy and low complexity state $\ket{y'}\in\mathcal{L}_C $ s.t. 
    \[
      \expval{H}{y'} \leq \epsilon
    \]
    \item or for any low complexity states $\ket{y'}\in\mathcal{L}_C$, 
    \[
      \expval{H}{y'} \geq \frac{1}{2}-\epsilon
    \]
  \end{enumerate}
  The $2$-Local Hamiltonian with Low Complexity ($2$-LHLC) problem is to decide which. 
  $2$-LHLC is QCMA-complete.
\end{theorem}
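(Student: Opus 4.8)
The plan is to prove the two directions separately: membership of $2$-LHLC in $\mathrm{QCMA}$, and $\mathrm{QCMA}$-hardness of $2$-LHLC. \emph{Membership.} I would take the classical witness to be an encoding of a gate sequence $U_1,\dots,U_T$ with $T=\mathrm{poly}(|x|)$ that prepares the promised state $\ket{y'}$. The verifier synthesizes $\ket{y'}=U_T\cdots U_1\ket{0}$ and then runs the usual energy-estimation test for $k$-LH: pick $i\in\{1,\dots,M\}$ uniformly, perform the measurement that reports outcome $1$ with probability $\expval{H_i}{y'}$ (each $H_i$ is $2$-local with $\|H_i\|\le 1$, so this is a constant-size procedure), and accept iff the outcome is $0$. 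The acceptance probability is then $1-\expval{H}{y'}/M$. In a ``yes'' instance the promised $\ket{y'}\in\mathcal L_C$ has a short description that makes the verifier accept with probability $\ge 1-\epsilon/M$; in a ``no'' instance every state --- hence every state the verifier could prepare from any witness --- has energy $\ge\tfrac{1}{2}-\epsilon$, so the verifier accepts with probability $\le 1-(\tfrac{1}{2}-\epsilon)/M$. Standard amplification widens this $1/\mathrm{poly}$ gap to the thresholds demanded by the definition of $\mathrm{QCMA}$.

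\emph{Hardness.} I would follow the Kempe--Kitaev--Regev proof that $2$-LH is $\mathrm{QMA}$-hard \cite{kempe2006complexity}, carrying along the extra low-complexity structure. Fix $L\in\mathrm{QCMA}$ with verifier $V_x=U_{|V_x|}\cdots U_1$. First replace $V_x$ by the equivalent verifier that copies each witness qubit into a fresh ancilla with a CNOT and afterwards uses only the copies; this leaves the language unchanged but forces the verifier to act \emph{classically on the witness register}, so that on any input state $\rho$ of that register the acceptance probability equals $\sum_y\bra{y}\rho\ket{y}\,p_{\mathrm{acc}}(y)\le\max_y p_{\mathrm{acc}}(y)$, which is $\le\epsilon$ in the ``no'' case. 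Now apply Kitaev's circuit-to-Hamiltonian construction and then the perturbation-theory gadgets of \cite{kempe2006complexity} to obtain, after rescaling, a $2$-local Hamiltonian $H$. In a ``no'' instance the soundness analysis of that construction yields $\expval{H}{\psi}\ge\tfrac{1}{2}-\epsilon$ for \emph{all} $\ket{\psi}$, in particular for all $\ket{\psi}\in\mathcal L_C$. In a ``yes'' instance there is a classical witness $y\in\{0,1\}^{n_x}$ accepted by $V_x$, and the associated history state $\ket{\eta_y}\propto\sum_{t=0}^{T}\ket{t}\,U_t\cdots U_1\ket{y}\ket{0}$, tensored with the fixed ground states of the gadget ancillas, has energy at most $\epsilon$ and --- the only genuinely new point --- lies in $\mathcal L_C$: the clock register is put into uniform superposition with $O(\log T)$ gates, $\ket{y}$ is produced by a few $X$ gates, and the time branches are correlated by $T$ controlled applications of the $U_t$, for $\mathrm{poly}(|x|)$ elementary gates in total. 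Hence $\ket{\eta_y}$ witnesses the ``yes'' side of $2$-LHLC, which completes the reduction. One could alternatively reduce from $3$-LHLC \cite{wocjan2003two} using the $3$-to-$2$-local gadgets of \cite{kempe2006complexity}, checking that they send low-complexity low-energy witnesses to low-complexity low-energy witnesses; the bookkeeping is essentially the same.

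\emph{Main obstacle.} The step I expect to demand the most care is the ``yes'' direction of the hardness argument: verifying that the efficiently preparable history state remains a valid low-energy witness \emph{after} the Kempe--Kitaev--Regev gadgets are applied --- i.e.\ that the dressing produced by the (second- and third-order) perturbative expansion together with the mediator-qubit ground states spoils neither the $O(\epsilon)$ energy bound nor the polynomial preparation cost. Concretely this reduces to the facts that each gadget's local ground state is a fixed, constant-size-preparable state and that the perturbative corrections are $O(1/\mathrm{poly})$ --- both already established in \cite{kempe2006complexity} --- together with the observation that ``efficiently preparable'' is closed under tensoring with such states and under $O(1/\mathrm{poly})$ perturbations. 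The remaining issue, tuning the amplification so that the gap lands exactly at $\epsilon$ versus $\tfrac{1}{2}-\epsilon$, is routine.
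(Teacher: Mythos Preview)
Your proposal is correct, and in one respect more careful than the paper, but it follows a different technical route.

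\textbf{Different reduction mechanism.} The paper does not use the perturbation-theory gadgets of \cite{kempe2006complexity}; it uses the \emph{other} proof in that paper, based on the projection lemma. Concretely, the paper builds the explicit $2$-local Hamiltonian $H=H_{out}+J_{in}H_{in}+J_1H_{prop1}+J_2H_{prop2}+J_{clock}H_{clock}$ (with the $C_\phi$/$Z$ gate trick and the $H_{qubit,t}$, $H_{time,t}$ terms), and proves soundness by iterating the projection lemma four times to get $\lambda(H)\ge\lambda(H_{out}|_{\mathcal S_{clock}\cap\mathcal S_{prop1}\cap\mathcal S_{prop}\cap\mathcal S_{in}})-\tfrac{4}{8}\ge\tfrac12-\epsilon$. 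The payoff is that the history state $\ket{y'}$ is \emph{exactly} a zero-eigenvector of every penalty term, so the low-energy witness in the yes case is literally $\ket{y'}$ with no mediator qubits and no perturbative dressing. This sidesteps precisely the obstacle you flag: there are no gadget ancillas whose ground states must be tensored in, and no $O(1/\mathrm{poly})$ corrections to worry about. Your gadget route also works, but the point you would have to make precise is not that ``efficiently preparable is closed under $O(1/\mathrm{poly})$ perturbations'' (which is false as stated) but rather that the \emph{undressed} state (history state $\otimes$ mediator ground states) already has energy within $O(1/\mathrm{poly})$ of the true ground energy, hence below the threshold after rescaling.

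\textbf{A point where you are more careful.} Your CNOT-copy preprocessing, forcing the verifier to act classically on the witness register, is not in the paper but is genuinely useful: it ensures that in the no case the acceptance probability is $\le\epsilon$ on \emph{every} input state, not just basis states, which is what the projection-lemma soundness chain actually needs (it bounds $\lambda(H_{out}|_{\mathcal S_{in}\cap\mathcal S_{prop}})$ over all history states, not just those with computational-basis inputs). The paper handles this only by a parenthetical remark that all $\lambda(\cdot)$'s should be read as restricted to $\mathcal L_C$, which is not quite a subspace argument; your preprocessing makes the soundness go through for all states unconditionally and is the cleaner fix.
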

\begin{proof}

\textbf{(Contained in QCMA)}
First note that the restriction to low complexity states makes every $k$-LHLC problem a QCMA problem. If the Hamiltonian $H$ has a low complexity low energy state $\ket{y'} $, we can use a classical proof to encode how to prepare such a state. It can be verified in quantum polynomial time. 

The only thing we need to check is that after applying the low complexity constraint, all QCMA problems can still be reduced to the $2$-LHLC problem.
\paragraph{\textbf{(completeness of $k$-LHLC)}}
We start from $k$-LHLC problem with $k = O(\log n)$, where $n = |x|$. Given a QCMA problem $L$, by definition, for each instance $x\in L$, there exists a quantum circuits $V_x = U_TU_{T-1}\cdots U_1 $ such that 
\begin{enumerate}
  \item $x\in L_{yes} \Rightarrow \exists\ket{y}\in\{0,1\}^{n_x} $ such that 
    \[
    \mathbb{P}[\text{get $\ket{0}$ on the first qubit of } V_x(\ket{y}\ket{0})\text{ after measurement}] \leq \epsilon
    \]
  \item $x\in L_{no} \Rightarrow \forall\ket{y}\in\{0,1\}^{n_x} $,
    \[
    \mathbb{P}[\text{get $\ket{0}$ on the first qubit of } V_x(\ket{y}\ket{0})\text{ after measurement}] \geq 1-\epsilon
    \]
\end{enumerate}
For this instance, we can construct a $k$-local Hamiltonian\cite{kempe2006complexity} by Kempe's construction 
for $(n_x+m_x+\log T )$-qubits low energy low complexity states.
\[
  H = J_{in}H_{in} + H_{out} +J_{prop}\sum\limits_{t=1}^T{ }H_{prop,t}
\]
\[
  H_{in} = \sum\limits_{i=n_x+1}^{n_x+m_x}\ketbra{1}_i \otimes \ketbra{0}
\]
\[
  H_{out} = (T+1)\ketbra{0}_1 \otimes \ketbra{T}
\]
\[
  H_{prop,t} = \frac{1}{2}(I\otimes \ketbra{t} + I\otimes \ketbra{t-1} - U_t\otimes \ket{t}\bra{t-1} - U_t^\dagger\otimes\ket{t-1}\bra{t})
\]
\[
  \ket{y'} = \frac{1}{\sqrt{T+1}}\sum\limits_{t=0}^T U_tU_{t-1}\cdots U_1(\ket{y}\ket{0})\otimes\ket{t}
\]

The first part is $(n_x + m_x) $ computational qubits and the second part is $\log(T+1)$ clock qubits. 
These hamiltonians are $O(\log(n))$-local because they acts on at most $2+\log(T+1)$ qubits at a time. 

Clearly, if $x\in L_{yes} $, then $\ket{y'} \in \mathcal{L}_C $ is low complexity because we only need polynomial number of quantum gates to prepare the basic state $(\ket{y}\ket{0})$ and it takes at most $poly(T)$ quantum gates operating on them to get $\ket{y'} $.
Moreover, 
\[
  \expval{H}{y'} = \expval{H_{out}}{y'} = (V_x(\ket{y}\ket{0}))^\dagger\ketbra{0}_1 V_x(\ket{y}\ket{0}) \leq \epsilon
\]

\paragraph{\textbf{(soundness of $k$-LHLC)}}
For the soundness, we need the projection lemma in \cite{kempe2006complexity},
\begin{lemma}[Projection Lemma](Please refer to the proof in \cite{kempe2006complexity})
  Given two hamiltonians $H_1, H_2 $. Let $\mathcal{S}_2 $ be the zero eigen space of $H_2$ and the eigenvectors in $\mathcal{S}_2^\perp $ has eigenvalue at least $J > 2||H_1||$. Then,
  \[
    \lambda(H_1|_{\mathcal{S}_2}) - \frac{||H_1||^2}{J-2||H_1||} \leq \lambda(H_1 + H_2) \leq \lambda(H_1|_{\mathcal{S}_2})
  \]
  where $\lambda(\cdot) $ denote the smallest eigenvalue and $\lambda(H_1|_{\mathcal{S}_2})$ is the smallest eigenvalue of $H_1$ corresponding to all eigenvectors orthogonal to $\mathcal{S}_2^{\perp} $. Moreover, we can choose $J_2 $ large enough so that
  \[
    JJ_2 > 2||H_1|| + 8 ||H_1||^2
  \]
  and hence
  \[
    \lambda(H_1|_{\mathcal{S}_2}) - \frac{1}{8} \leq \lambda(H_1 + J_2H_2)
  \]
\end{lemma}
\paragraph{}
Let $\mathcal{S}_{in} $ denote the zero eigenspace of $H_{in} $. We can see that the space is actually a space for valid inputs. That is, a $n_x $ qubits states follows by $m_x $ ancilla qubits. With projection lemma, 
we can lower bound $\lambda(H)$ by 
\[
  \lambda((H_{out} + J_{prop}\sum\limits_{t=1}^T H_{prop,t}) |_{\mathcal{S}_{in}}) - \frac{1}{8} \leq \lambda(H)
\]
In other words, we can simply rule out other invalid input states by choosing $J_{in} $ large enough. We can regard $H_{in} $ and $H_{prop,t} $ as constraints that force the state to do exactly what we want. For example, valid input and states going through $U_T, \ldots, U_1 $. If any of them violated, it would cause large energy to $H$.

Similar to $H_{in} $, we can also choose $J_{prop} $ large enough so that 
\[
  \lambda(H_{out}|_{\mathcal{S}_{in} \cap \mathcal{S}_{prop}}) - \frac{1}{8} \leq \lambda((H_{out} + J_{prop}\sum\limits_{t=1}^T H_{prop,t}) |_{\mathcal{S}_{in}})
\]
and hence
\[
  \lambda(H_{out}|_{\mathcal{S}_{in} \cap \mathcal{S}_{prop}}) - \frac{2}{8} \leq \lambda(H)
\]
By the definition of QCMA, if $x\in L_{no} $, then
\[
  \lambda(H_{out}|_{\mathcal{S}_{in} \cap \mathcal{S}_{prop}}) \geq 1-\epsilon
\]
which results in 
\[
  \lambda(H) \geq \frac{3}{4}- \epsilon
\]
Note that we should write $\lambda(H_{out}|_{\mathcal{L}_C \cap \mathcal{S}_{in} \cap \mathcal{S}_{prop}}) - \frac{2}{8} \leq \lambda(H|_{\mathcal{L}_C})$. For simplicity, we ignore the notation for low complexity constraint $\mathcal{L}_C$ while writing $\lambda(\cdot) $. Therefore, $k$-LHLC with $k = O(\log n) $ is QCMA-complete. 

\paragraph{\textbf{(From $k$-LHLC to $2$-LHLC)}} Note that if we use unary representation to keep the clock qubits, $k$ can be reduce to $5$. We can replace $\ket{t}\bra{t-1} $ by $\ket{110}\bra{100}_{t-1} $ and hence the bottleneck would be the term
\[
  U_t \otimes \ket{110}\bra{100}_{t-1} \text{ , and } U_t^\dagger \ket{100}\bra{110}_{t-1}
\]
They operate on at most $5$ qubits. $2$ for computational qubits in $U_t $ and $3$ for clock qubits.

It has also been shown that, actually, we just need $1$ qubit to keep the clock if we can somehow ensure the clock to be always valid. i.e. $1$ always happens before $0$. It turns out that we can simply add more clock constraints to $H$ to ensure this. Let
\[
  H = J_{in}H_{in} + H_{out} +J_{prop}\sum\limits_{t=1}^T{ }H_{prop,t} + J_{clock}H_{clock}
\]
where $H_{clock} = \sum\limits_{1\leq i < j \leq T} \ketbra{01}_{ij} $ and we use $T$ qubits to keep the clock as uniry representation. Other parts remain the same. Then, the bottleneck becomes
\[
  U_t \otimes \ket{1}\bra{0}_{t} \text{ , and } U_t^\dagger \ket{0}\bra{1}_{t}
\]
which operates on at most $3$ qubits now. Note that all other hamiltonians are already $2$-local. Let 
\[
  C_\phi = \begin{pmatrix}
    1 & 0 & 0 & 0  \\
    0 & 1 & 0 & 0  \\
    0 & 0 & 1 & 0  \\
    0 & 0 & 0 & -1 \\
  \end{pmatrix}, 
  Z = \begin{pmatrix}
  1 & 0 \\
  0 & -1\\
  \end{pmatrix}
\]
One fact is that $C_\phi $ and all 1-qubits gates is universal. WLOG, we only need to focus on the hamiltonian with $U_t = C_\phi $. Moreover, $C_\phi = (Z\otimes I)(I\otimes Z)C_\phi(I\otimes Z)(Z\otimes I) $. We can replace all $C_\phi $ by these $5$ gates in sequence. We can also add any $I$ gates to $V_x$ in order to make sure that $C_\phi $ locates at time $L, 2L, \ldots, T_{2}L$. Since the $2$-qubit gate is $C_\phi $ and it only situates at $L, 2L,\ldots $ and follows and leads by $2$ $Z$-gates. We can check the propagation of states by directly pairwise compare the states in $\ket{L-2}, \ket{L-1}, \ket{L}, \ket{L+1}, \ket{L+2}, \ket{L+3} $ without using $C_\phi \otimes \ket{L} $ as constraint. It reduces the $3$-LHLC to $2$-LHLC. The final version of Kempe's construction of local hamiltonians is as follows:
\[
  H = H_{out}+J_{in}H_{in} +J_{1}H_{prop1}+J_{2}H_{prop2} + J_{clock}H_{clock}  
\]
\[
  H_{in} = \sum\limits_{i = n_x+1}^{n_x+m_x}\ketbra{1}_i\otimes \ketbra{0}_1,\;\;
  H_{out} = (T+1)\ketbra{0}_1\otimes \ketbra{1}_T
\]
\[
  H_{clock} = \sum\limits_{1\leq i< j \leq T} \ketbra{01}_{ij}
\]
\[  
  H_{prop1} = \sum\limits_{t\in T_1} H_{prop,t}, \;\; T_1 = \{1,\ldots T\}\backslash \{L, 2L,\ldots\}
\]
\[
  H_{prop,t} = \frac{1}{2}(I\otimes \ketbra{10}_{t,t+1} + I\otimes\ketbra{10}_{t-1,t} - U_t\otimes\ket{1}\bra{0}_t - U_t^\dagger\otimes\ket{0}\bra{1}_t)
\]
\[
  H_{prop,1} = \frac{1}{2}(I\otimes \ketbra{10}_{1,2} + I\otimes\ketbra{0}_{1} - U_t\otimes\ket{1}\bra{0}_1 - U_t^\dagger\otimes\ket{0}\bra{1}_1)
\]
\[
  H_{prop,T} = \frac{1}{2}(I\otimes \ketbra{1}_{T} + I\otimes\ketbra{10}_{T-1,T} - U_t\otimes\ket{1}\bra{0}_T - U_t^\dagger\otimes\ket{0}\bra{1}_T)
\]
\[
  H_{prop2} = \sum\limits_{\ell =1}^{T_2}(H_{qubit,\ell L} + H_{time, \ell L})
\]
and with $f_t $ and $s_t $ being the first and second qubits of $C_\phi $ gate at time t, 
\[
  H_{qubit,t} = \frac{1}{2}(-2\ketbra{0}_{f_t} - 2\ketbra{0}_{s_t} + \ketbra{1}_{f_t} + \ketbra{1}_{s_t}) \otimes (\ket{1}\bra{0}_t \otimes \ket{0}\bra{1}_t)
\]
\begin{eqnarray*}
\begin{aligned}
  &H_{time, t} = \frac{1}{8}I\otimes &(&\ketbra{10}_{t,t+1} + 6\ketbra{10}_{t+1,t+2} + \ketbra{10}_{t+2,t+3}&&\\
  &&+&2\ket{11}\bra{00}_{t+1,t+2} + 2\ket{00}\bra{11}_{t+1,t+2} &\\
  &&+&\ket{1}\bra{0}_{t+1} + \ket{0}\bra{1}_{t+1} + \ket{1}\bra{0}_{t+2} + \ket{0}\bra{1}_{t+2} &\\
  &&+&\ketbra{10}_{t,t-1} + 6\ketbra{10}_{t-1,t-2} + \ketbra{10}_{t-2,t-3}&&\\
  &&+&2\ket{11}\bra{00}_{t-1,t-2} + 2\ket{00}\bra{11}_{t-1,t-2} &\\
  &&+&\ket{1}\bra{0}_{t-1} + \ket{0}\bra{1}_{t-1} + \ket{1}\bra{0}_{t-2} + \ket{0}\bra{1}_{t-2} )&\\
\end{aligned}
\end{eqnarray*}
The completeness is straightforward. If $x\in L_{yes} $, we can construct $\ket{y'} $ as in previous $k$-LHLC. We will get
\[
  \expval{H}{y'} = \expval{H_{out}}{y'} \leq \epsilon
\]
The soundness is proved by repeatedly applying the projection lemma.
\begin{eqnarray*}
\begin{aligned}
  &&&\lambda(H)&\\
  &\geq&&\lambda(H_{out} + J_{in}H_{in} + J_{prop2}H_{prop2} + J_{prop1}H_{prop1}|_{\mathcal{S}_{clock}}) - \frac{1}{8}&\\
  &\geq&&\lambda(H_{out} + J_{in}H_{in} + J_{prop2}H_{prop2}|_{\mathcal{S}_{clock} \cap S_{prop1}}) - \frac{2}{8}&\\  
  &&\vdots&&\\
  &\geq&&\lambda(H_{out}|_{\mathcal{S}_{clock} \cap\mathcal{S}_{prop1}\cap\mathcal{S}_{prop} \cap \mathcal{S}_{in}})-\frac{4}{8}&\\
  &\geq&&1-\epsilon-\frac{4}{8}&\\
  &=&&\frac{1}{2}-\epsilon&\\
\end{aligned}
\end{eqnarray*}
The elimination of $J_{prop2}H_{prop2} $ is not exactly the same as with other hamiltonians, but the results are similar. 
For more details, please refer to \cite{kempe2006complexity}. The only difference between our $H$ and theirs is that throughout the whole argument, we restrict the input state to be low complexity $\mathcal{L}_C $. Even while we refer to the smallest eigenvalue of some hamiltonian $\lambda(H) $, we only refer to those corresponding to low complexity eigenvectors. Most of the results inherit directly from the original $2$-LH construction. 
\end{proof}

\bibliography{bibliography}
\end{document}